\newtheorem{theorem}{Theorem}[section]
\newtheorem{lemma}[theorem]{Lemma}
\newenvironment{remark}[1][Remark :]{\begin{trivlist}
\item[\hskip \labelsep {\bfseries #1}]}{\end{trivlist}}
\newcommand{\F}{\mathbb{F}}
\title{Refinements of Miller's Algorithm over Weierstrass Curves Revisited}
\author{Duc-Phong Le\\ \small Temasek Laboratories\\[-0.8ex] \small National University of Singapore\\[-0.8ex] \small \texttt{tslld@nus.edu.sg} \and Chao-Liang Liu\\ \small Department of Applied Informatics and Multimedia\\[-0.8ex] \small Asia University.\\[-0.8ex] \small \texttt{jliu@asia.edu.tw} }
\date{}
\begin{document}

\maketitle


\begin{abstract}
In 1986 Victor Miller described an algorithm for computing the Weil pairing in his unpublished manuscript. This algorithm has then become the core of all pairing-based cryptosystems. Many improvements of the algorithm have been presented. Most of them involve a choice of elliptic curves of a \emph{special} forms to exploit a possible twist during Tate pairing computation. Other improvements involve a reduction of the number of iterations in the Miller's algorithm. 

For the generic case, Blake, Murty and Xu proposed three refinements to Miller's algorithm over Weierstrass curves. Though their refinements which only reduce the total number of vertical lines in Miller's algorithm, did not give an efficient computation as other optimizations, but they can be applied for computing \emph{both} of Weil and Tate pairings on \emph{all} pairing-friendly elliptic curves. In this paper we extend the Blake-Murty-Xu's method and show how to perform an elimination of all vertical lines in Miller's algorithm during Weil/Tate pairings computation on \emph{general} elliptic curves. Experimental results show that our algorithm is faster about 25\% in comparison with the original Miller's algorithm.
\end{abstract}

\maketitle

\section{Introduction}
In recent years, the Weil/Tate pairings and their variants have become extremely useful in cryptography. The first notable application of pairings to cryptology was the work of Menezes, Okamato and Vanstone~\cite{MOV91}  who showed that the discrete logarithm problem on a supersingular Elliptic Curve can be reduced to the discrete logarithm problem in a Finite Field in 1991 due to the Weil pairing. Frey and R\"uck~\cite{FR94} also consider this situation using the Tate pairing. 
However, the applications of pairings in constructing cryptographic protocols has only attracted attention after Joux' seminal paper describing an one-round 3-party Diffie-Hellman key exchange protocol~\cite{Jou00} in 2000. Since then, the use of cryptosystems based on pairings has had a huge success with some notable breakthroughs such as the first practical Identity-based Encryption (IBE) scheme~\cite{BF01}, the short signature scheme~\cite{BLS01} from Weil pairing.

The efficient algorithms for Weil/Tate parings computation thus play a very important role in pairing-based cryptography. The best known method for computing Weil/Tate pairings is based on Miller's algorithm~\cite{Mil04} for rational functions from scalar multiplications of divisors. The Weil pairing requires two \textit{Miller loops}, while the Tate pairing requires only one application of the \textit{Miller loop} and a \textit{final exponentiation}. 

Consequently, many improvements on Miller's algorithm presented are based in some manner on it. Barreto et al.~\cite{BLS03} pointed out that we can ignoring all terms that are contained in a proper subfield of $\mathbb{F}_{p^k}$ during the computation of Tate pairing when the elliptic curves chosen have the even \emph{embedding degree}\footnote{A subgroup $G$ of the group of points of an elliptic curve $E(\mathbb{F}_q)$ is said to have {\it embedding degree} $k$ if its order $n$ divides $q^k - 1$, but does not divide $q^i - 1$ for all $0 < i < k$.}. Another approach of improving the algorithm is to reduce the Miller-loop length by introducing variants of Tate pairings such as Eta pairing~\cite{BGHS07}, Ate pairing~\cite{HSV06,LLP09}, and  optimal pairings~\cite{Ver10, Hes08}.

For a more generic approach, Blake, Murty and Xu~\cite{BMX06} proposed three refinements to Miller's algorithm. Their refinements allowed to reduce the total number of vertical lines in Miller's algorithm thanks to an elegant observation involving \textit{conjugate} of a linear function $h(x, y) = k(x - a) + b - y$\footnote{The equation of the conjugate of $h$, denoted $\bar{h}(x, y)$ is $\bar{h}(x, y) = k(x - a) + b + y + a_1 x + a_3$, where $a_1, a_3$ are parameters of an elliptic curve of the Weierstrass form~\cite{BMX06}.}. Though this approach did not bring a dramatic efficiency as that of Barreto et al. for Tate pairing computation, but it can be applied for computing both Weil and Tate pairings on any pairing-friendly elliptic curve.

Recently, Boxall et al. \cite{BELL10} presented a variant of Miller's algorithm due to a variant of Miller's formulas. Similar to the approach of Blake et al., their algorithm can also be applied on general elliptic curves.

In this paper we extend the Blake-Murty-Xu's method and show how to eliminate all of vertical lines in Miller's algorithm. Our algorithm is \emph{generically} faster than the original Miller's algorithm, and its refinements~\cite{BMX06, LHC07} for all pairing-friendly curves with any embedding degree. As previous refinements, our algorithm does not eliminate denominators, but it  improves the performance for both Weil and Tate pairings computation on {\em general} pairing-based elliptic curves. Our algorithm is of particular interest to compute the Ate-style pairings on elliptic curves with small embedding degrees $k$, and in situations where denominators elimination using a twist is not possible (for example on curves with embedding degree $k$ not of the form $2^i3^j$, where $i \ge 1$, $j \ge 0$). 

We also study, in this paper, a modification of our algorithm which can eliminate denominators when computing Tate pairing on elliptic curves with even embedding degree. The efficiency of this modified algorithm can thus be comparable to that of Barreto et al.~\cite{BLS03}.

The rest of the paper is organized as follows. We briefly recall definitions of the Weil/Tate pairings, Miller's algorithm and the Blake-Murty-Xu's method in Section 2. Section 3 presents our improvements to the original Miller's algorithm for general elliptic curves. Section 4 analyzes theoretically the efficiency of our algorithm and compares it with previous improvements. Section 5 will discuss a modification without denominators applicable when the embedding degree $k$ is even. Section 6 will give some experimental results. The conclusion and open problems will be given in Section 7.

\section{Preliminaries}
In this section, we give a brief summary of several mathematical background and the definitions of the Weil/Tate pairings. We review then Miller's algorithm for Weil/Tate pairing computation. Finally, we briefly recall the Blake-Murty-Xu's method for reducing vertical lines in Miller's algorithm. 

\subsection{Divisors and Bilinear Pairings}
Let $K = \mathbb{F}_q = \mathbb{F}_{p^m}$ be a finite field of $p$ characteristic with $q$ elements and $p > 3$ must be a prime number. An elliptic curve $E$ defined over $K$ in short Weierstrass form is the set of solutions $(x, y)$ to the following equation: 

$$E : y^2 = x^3 + ax + b,$$

together with an extra point ${\cal O}$ which is called the \textit{point at infinity} of $E$. Where $a, b \in \F_q$ such that the discriminant $\Delta = 4a^3 + 27b^2$ is non-zero.

A \textit{divisor} is an element of the free abelian group $Div(E)$ generated by the points of $E$. Given a divisor $D = \sum_{P \in E} n_P(P)$, where $n_P \in \mathbb{Z}$ and only a finite number of the integers $n_P$ are nonzero, the \textit{degree} of $D$, denoted $\text{deg}D$, is the integer $\sum_{P \in E} n_P$, and the order of $D$ at the point $P$, denoted $\text{ord}_P(D)$, is the integer $n_P$. The \textit{support} of $D$ is the set of point $P$ such that $n_P \ne 0$.

Let $f \in K(E)$ be a nonzero rational function. The divisor of $f$ is indeed a finite formal sum
$$div(f) = \sum_{P \in E} \text{ord}_P(f)(P),$$
where $\text{ord}_P(f)$ is the order of the zero or pole of $f$ at $P$, that is, $\text{ord}_P(f) > 0$ if $P$ is a zero of $f$, $\text{ord}_P(f) < 0$ if $P$ is a pole of $f$, and $\text{ord}_P(f) = 0$ otherwise. It follows from the definition that $div(fg) = div(f) + div(g)$ and $div(f/g) = div(f) - div(g)$ for any two nonzero rational functions $f$ and $g$ defined on $E$. 

Let $Div^0(E)$ be the subgroup of $Div(E)$ consisting of divisors of degree $0$. It turns out that $div(f)$ has the degree $0$ (i.e. $div(f) \in Div^0(E)$) and is called \textit{principal divisor}. A divisor $D$ is called \textit{principal} if $D = div(f)$ for some function $f$. It is known that a divisor $D = \sum_{P \in E} n_P(P)$ is principal if and only if the degree of $D$ is zero and $\sum_{P \in E} n_P P = \cal{O}$. Two divisors $D_1$ and $D_2$ are equivalent on $Div^0(E)$, denoted $D_1 \sim D_2$, if and only if their difference $D_1 - D_2$ is a principal divisor\footnote{We refer the readers to~\cite{Kob98} for more details about divisors and rational functions.}.

The key to the definition of pairings is the evaluation of rational functions in divisors. For any function $f$ and any divisor $D= \sum_{P \in E} n_P(P)$ of degree $0$, we define $f(D) = \prod_{P \in E}f(P)^{n_P}$. Let $r$ be an integer co-prime to the characteristic $p$ of $E$, and $P, Q \in E[r]$, where $E[r]$ is the set of points of order $r$. Let $D_P, D_Q \in Div^0(E)$ be two divisors which are equivalent to $(P) - (\mathcal{O})$ and $(Q) - (\mathcal{O})$, respectively and such that $D_P$ and $D_Q$ have disjoint supports. As $rD_P$ and $rD_Q$ are principal, and hence there exist functions $f_P, f_Q$ such that $div(f_P) = rD_P$ and $div(f_Q) = rD_Q$. Then the Weil pairing $\omega : E[r] \times E[r] \mapsto \mathbb{F}_{q^k}$ is defined as
$$\omega(P, Q) = \frac{f_P(D_Q)}{f_Q(D_P)},$$
where $k$ is the embedding degree of $E(\mathbb{F}_q)$.

The Tate pairing is also defined based on $f_P(D_Q)$. Let $P \in E(\mathbb{F}_q)[r]$ and $Q \in E(\mathbb{F}_{q^k})[r]$ be linearly independent points. Then the (reduced) Tate pairing $\tau : E(\mathbb{F}_q)[r] \times E(\mathbb{F}_{q^k})[r] \mapsto \mathbb{F}_{q^k}^*$ is defined as
$$\tau(P, Q) = f_P(D_Q)^{\frac{q^k - 1}{r}}.$$

The Weil/Tate pairings satisfy the properties: \textit{bilinearity}, \textit{non-degeneracy} and \textit{compatibility with isogenies}. 
\paragraph*{{\bf The twist of a curve}.}

Let $d$ be a factor of $k$, an elliptic curve $E'$ over $\mathbb{F}_{q^{k/d}}$ is called a twist of degree $d$ of $E$ if there exists an isomorphism $\psi : E' \mapsto E$ defined over $\mathbb{F}_{q^{d}}$.

A twist of $E$ is given by $E': y^2 = x^3 + a\beta^4 x + b\beta^6$ for some $\beta \in \mathbb{F}_{q^k}$. The isomorphism between $E'$ and $E$ is $\psi : E' \mapsto E : (x', y') \mapsto (x'/\beta^2, y'/\beta^3)$.

\subsection{The Miller's Algorithm}

The pairings over (hyper)elliptic curves are computed using the algorithm proposed by Miller~\cite{Mil04}. The main part of the Miller's algorithm is constructing the rational function $f_{r,P}$ and evaluating $f_{r,P}(Q)$ with $div(f_{r,P}) = r(P) - (rP) - [r - 1]({\cal O})$ for divisors $P$ and $Q$. Let $G_{iP,jP}$ be a rational function with 
          $$div(G_{iP,jP}) = (iP) + (jP) - ([i+j]P) - (\mathcal{O})$$

Miller's algorithm is based on the following relation describing the
so-called \emph{Miller's formula}, which is proved by considering divisors.

          $$f_{i+j,P} = f_{i,P} f_{j,P} G_{iP,jP}.$$

Let $L_{iP, jP}$ be equation of the line passing through $iP$ and $jP$ (or the equation of the tangent line to the curve if $i = j$). Let $V_{iP + jP}$ be equation of the vertical line passing through $(iP + jP)$ and $-(iP + jP)$. In the case of elliptic curves, then
$$div(L_{iP, jP}) = (iP) + (jP) + (-[i + j]P) - 3(\mathcal{O}), \text{ and }$$
$$div(V_{[i + j]P}) = ([i + j]P) + (-[i + j]P) - 2(\mathcal{O}).$$
Thus, we have
$$G_{iP, jP} = \frac{L_{iP, jP}}{V_{(i + j)P}}.$$

We say that in the case of elliptic curves, $G_{iP,jP}$ is the line passing through the points $iP$ and $jP$ divided by the vertical line passing through the point $[i + j]P$.

Notice that $div(f_0) = div(f_1) = 0$, so that $f_0 = f_1 = 1$. Let the binary representation of $r$ be $r = \sum_{i = 0}^t b_i 2^i$, where $b_i \in \{0, 1\}$. Using the double-and-add method, Miller's algorithm is described as in {\bf Algorithm~\ref{algo:Miller}}.

\begin{algorithm}[H]
   \KwIn{  $r = \sum_{i = 0}^t b_i 2^i$ with $b_i \in \{0, 1\}$, $P, Q \in E[r]$;}
   \KwOut{ $f = f_r(Q)$;}
  \BlankLine
   { 
   $T \leftarrow P$, $f \leftarrow 1$\;
 \For{$i = t - 1$  \emph{\KwTo}  $0$} 
 {    
    \nl $f \leftarrow {f}^2 \frac{L_{T, T}(Q)}{V_{2T}(Q)}$, $T \leftarrow 2T$ \; \label{Miller:line1}
   \If{$b_i = 1$}   {  
    \nl ${f} \leftarrow {f} \frac{L_{T, P}(Q)}{V_{T + P}(Q)}$, $T \leftarrow T + P$ \;\label{Miller:line2}
   }  }
    \KwRet{ $f$ } }
   \caption{Miller's Algorithm $(P, Q, r)$}
   \label{algo:Miller}

\end{algorithm}

\subsection{Blake-Murty-Xu's method}
Blake et al. achieved three refinements to Miller's algorithm in~\cite{BMX06} thanks to the following observation:
\begin{lemma}[Lemma 1,~\cite{BMX06}]
If the line $h(x, y) = 0$ intersects with $E$ at points $P = (a, b)$, $Q = (c, d)$ and $-(P + Q)$ with $P + Q = (\alpha, \beta)$, then
$$h(x, y)\bar{h}(x, y) = -(x-a)(x-c)(x-\alpha).$$ 
\label{lem1}
\end{lemma}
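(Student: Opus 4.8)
The plan is to treat both sides as elements of the coordinate ring of $E$, i.e. to work modulo the relation $y^2 = x^3 + ax + b$, and to show the identity by collapsing the left-hand side to a polynomial in $x$ alone and then identifying that polynomial with the right-hand side through the group law. Write $\ell(x) = k(x-a) + b$ for the affine part of the chord, so that $h(x,y) = \ell(x) - y$ and, in the short Weierstrass setting used here (where $a_1 = a_3 = 0$), $\bar h(x,y) = \ell(x) + y$; in the general Weierstrass model one keeps the $a_1 x + a_3$ correction of the footnote.

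First I would compute the product directly: $h\bar h = (\ell(x) - y)(\ell(x) + y) = \ell(x)^2 - y^2$, and then use the curve equation to substitute $y^2 = x^3 + ax + b$, obtaining $h\bar h = \ell(x)^2 - (x^3 + ax + b)$. This is now a genuine polynomial in $x$ of degree $3$ with leading coefficient $-1$. (In the general Weierstrass form the analogous step, using that the two points above a given $x$ are $(x,y)$ and $(x,-y-a_1x-a_3)$, yields $h\bar h = \ell^2 + \ell(a_1x+a_3) - (x^3 + a_2x^2 + a_4x + a_6)$, again a cubic in $x$ alone with leading coefficient $-1$.)

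Next I would invoke the group law. By hypothesis the points of $E$ on the line $h = 0$ are exactly $P = (a,b)$, $Q = (c,d)$ and $-(P+Q) = (\alpha, -\beta)$. Substituting $y = \ell(x)$ into the curve equation gives the monic cubic $x^3 + ax + b - \ell(x)^2 = 0$, whose roots are precisely the $x$-coordinates of those three intersection points, namely $a$, $c$ and $\alpha$, counted with multiplicity (so the tangent case $P = Q$ simply produces a double root at $a$). Hence $x^3 + ax + b - \ell(x)^2 = (x-a)(x-c)(x-\alpha)$; multiplying by $-1$ and comparing with the formula of the previous paragraph yields $h\bar h = \ell(x)^2 - (x^3+ax+b) = -(x-a)(x-c)(x-\alpha)$, as claimed.

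The computations are routine; what needs care is conceptual. One must state clearly that the asserted equality lives in the function field of $E$ (equivalently, holds after reduction modulo $y^2 - x^3 - ax - b$), not in $K[x,y]$, since $h\bar h$ literally contains a $y^2$ term. One must also set aside the genuinely different degenerate cases — a vertical chord (which is $V_{P+Q}$, not a function of the form $h$) and $P = \mathcal O$ or $Q = \mathcal O$ — so that $\ell$, $a$ and $c$ are all finite and well defined. The only substantive input is that $P$, $Q$, $-(P+Q)$ are exactly the three intersection points of the chord with $E$, which is just the definition of addition on $E$; once that is granted, the factorization of the cubic is forced. I expect the main obstacle in a clean write-up to be bookkeeping the conjugate correctly in the general Weierstrass model (the $a_1x + a_3$ term), so that $h\bar h$ still collapses to a polynomial in $x$ only — in the short form used here this is immediate.
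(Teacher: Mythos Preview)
Your argument is correct: expanding $h\bar h=\ell(x)^2-y^2$, reducing modulo the curve equation, and then factoring the resulting monic cubic in $x$ via the three intersection $x$-coordinates is exactly the intended computation, and your caveats about working in $K(E)$ and about the degenerate (vertical / point at infinity) cases are the right ones to flag.

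As to comparison with the paper: the paper does \emph{not} give its own proof of this lemma; it simply quotes Lemma~1 of \cite{BMX06} and writes ``The proof was proven in~\cite{BMX06}.'' The paper even remarks that the original argument in \cite{BMX06} ``makes use of a geometrical observation,'' which is precisely the route you take. Where the paper does supply a fresh proof is for the closely related Lemma~\ref{lem2}, the divisor identity $L_{iP,jP}L_{-iP,-jP}=V_{iP}V_{jP}V_{[i+j]P}$, and there the authors proceed purely by comparing divisors rather than by the algebraic manipulation you use. So your approach matches the (cited, unshown) Blake--Murty--Xu proof, while the paper's own contribution is to recast the statement in divisor language; the divisor proof is coordinate-free and handles the general Weierstrass model without tracking $a_1,a_3$, whereas your direct computation is more elementary and makes the factorization $-(x-a)(x-c)(x-\alpha)$ visible immediately.
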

The proof was proven in ~\cite{BMX06}. From this lemma, they gave the following equation:
\begin{equation}
\frac{L_{T, T}(Q)}{V_{T}^2(Q)V_{2T}(Q)} = -\frac{1}{L_{T, T}(-Q)}, 
\label{eq1}
\end{equation} 
where $L_{T, T}(-Q) = \bar{L}_{T, T}(Q)$.

The way of Blake et al. to apply this observation to Miller's algorithm is simple. They delay a vertical line ($V_{2T}$ in a doubling step or $V_{T + P}$ in an addition step) in the denominator at each iteration for the next iteration. Two vertical lines can thus be eliminated as we can see in Eq.~(\ref{eq1}). When most bits in the binary representation of $r = \sum_{i = 0}^t b_i 2^i$ are $1$, the author described the algorithm as in {\bf Algorithm~\ref{algo:BMX2}}.

\begin{algorithm}[H]
    
   \KwIn{  $r = \sum_{i = 0}^t b_i 2^i$ with $b_i \in \{0, 1\}$, $P, Q \in E$ where $P$ has order $r$;}
   \KwOut{ $f = f_r(Q)$;}
    \BlankLine
   { 
    \eIf{$b_{t - 1} = 0$} {
      $f \leftarrow L_{P, P}$; $T \leftarrow 2P$ \;
    }
    {
      $f \leftarrow \frac{L_{P, P}(Q)L_{2P, P}(Q)}{V_{2P}(Q)}$; $T \leftarrow 3P$ \;
    }
   
 \For{$i = t - 2$  \emph{\KwTo}  $0$} 
 {    
   \eIf{$b_i = 0$}   {  
	$f \leftarrow {f}^2 \frac{V_{2T}(Q)}{L_{T, T}(-Q)}$; $T \leftarrow 2T$ \;      
   }
   {  
    ${f} \leftarrow f^2 \frac{L_{2T, P}(Q)}{L_{T, T}(-Q)}$; $T \leftarrow 2T + P$ \;
    }  
  }
    \KwRet{ $f$ }}   
    
\caption{Improved Miller's Algorithm of Blake et al. (Algorithm~4 in~\cite{BMX06})}
\label{algo:BMX2}

\end{algorithm}

The Blake et al.'s algorithm eliminated all (two) vertical lines if the bit $b_i$ is $1$. When the bit $b_i$ is zero, the computation cost is the same as in Miller's algorithm. Thus, the rest number of vertical lines that has to be computed is the number of bits $0$ in binary representation of $r$, or $l = t - H(r)$, where $H(r)$ is the Hamming weight of $r$ and $t$ is the number of bits of $r$. This algorithm works well when the Hamming weight $H(r)$ is high. When $H(r)$ is low, the authors also presented a refinement to Miller's algorithm in radix 4 ($r = \sum_{i = 0}^{t/2} q_i 4^i$, with $q_i \in \{0, 1, 2, 3\}$). This refinement can save at most all of two vertical lines when a pair of consecutive bits is ``$00$'' ($q_i = 0$). However, there still exists one vertical line in the case of $q = \{1, 2\}$, and two vertical lines if $q_i = 3$. 

Then, Liu et al.~\cite{LHC07} presented a further refinement to the Miller's algorithm by using the Blake-Murty-Xu's method. Their improvement requires an additional algorithm for segmenting the binary representation of the order of subgroups $r$ into 7 cases : $(00)^{i}$, $(00)^{i}0$, $(1)^{i}$, $(01)^{i}$, $0(1)^{i}$, $(1)^{i}0$ and $0(1)^{i}0$. Though the algorithm is complex, but it allowed to reduce more lines than Blake-Murty-Xu's algorithm. 

\section{Our Improvement on Miller's Algorithm}

Firstly, we present two following lemmas that are the same as lemma 1 and lemma 2 of \cite{BMX06} if we replace $L_{iP, jP}(-Q)$ by $L_{-iP, -jP}(Q)$. However, unlike the proof of lemma~1 in \cite{BMX06} that makes use of a geometrical observation, our lemma is achieved by calculating divisors.
\begin{lemma}
\label{lem2}
Let $L_{iP, jP}$ be equation of the line passing through $iP$ and $jP$, $L_{-iP, -jP}$ be equation of the line passing through $-iP$ and $-jP$, and let $V_{iP + jP}$ be equation of the vertical line passing through $(iP + jP)$ and $-(iP + jP)$. Then,
\begin{equation}
L_{iP, jP}(Q)L_{-iP, -jP}(Q) = V_{iP}(Q)V_{jP}(Q)V_{[i+j]P}(Q). 
\label{eq2}
\end{equation}
 
\end{lemma}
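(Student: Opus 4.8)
The plan is to prove the identity by computing the divisor of each side and showing they agree, then arguing the two rational functions differ by a constant which must be $1$. First I would recall the standard divisors on a Weierstrass curve: if the line $L_{iP,jP}=0$ meets $E$ at $iP$, $jP$, and $-[i+j]P$, then
\begin{equation*}
\operatorname{div}(L_{iP,jP}) = (iP) + (jP) + (-[i+j]P) - 3(\mathcal{O}),
\end{equation*}
and similarly $\operatorname{div}(L_{-iP,-jP}) = (-iP) + (-jP) + ([i+j]P) - 3(\mathcal{O})$, since the line through $-iP$ and $-jP$ meets $E$ at the reflection $[i+j]P$ of $-[i+j]P$. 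For the vertical lines, $\operatorname{div}(V_{mP}) = (mP) + (-mP) - 2(\mathcal{O})$. I would then add the divisors of the two linear factors on the left and the three vertical factors on the right and check termwise that
\begin{equation*}
\bigl[(iP)+(-iP)-2(\mathcal{O})\bigr] + \bigl[(jP)+(-jP)-2(\mathcal{O})\bigr] + \bigl[([i+j]P)+(-[i+j]P)-2(\mathcal{O})\bigr]
\end{equation*}
equals $(iP)+(jP)+(-[i+j]P)-3(\mathcal{O}) + (-iP)+(-jP)+([i+j]P)-3(\mathcal{O})$. Both collapse to $(iP)+(jP)+(-iP)+(-jP)+([i+j]P)+(-[i+j]P) - 6(\mathcal{O})$, so the divisors coincide.

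Since two rational functions on $E$ with the same divisor differ by a nonzero constant in $\overline{K}$, we get $L_{iP,jP}\,L_{-iP,-jP} = c\cdot V_{iP}V_{jP}V_{[i+j]P}$ for some constant $c$. To pin down $c=1$, I would compare leading behaviour at $\mathcal{O}$ (or, equivalently, compare coefficients of a convenient monomial). Using the usual uniformizer $t = x/y$ at $\mathcal{O}$, each nonvertical line $y = \lambda x + \nu$ has a pole of order $3$ at $\mathcal{O}$ and each vertical line $x = x_0$ has a pole of order $2$; matching the coefficient of the leading term $t^{-6}$ on both sides — which reduces to comparing the product of the two slopes on the left with the product $1\cdot 1\cdot 1$ coming from the three monic vertical lines on the right, together with the fact that the slope of $L_{-iP,-jP}$ is the negative of the slope of $L_{iP,jP}$ and the curve's $a_1=a_3=0$ — forces $c=1$. (Alternatively, one evaluates both sides at a single generic affine point and uses the explicit line equations to get $c=1$ directly.)

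The main obstacle is the bookkeeping of degenerate configurations: when $i=j$ the line $L_{iP,iP}$ is the tangent at $iP$ (a double zero there), when $iP = \pm jP$ or $[i+j]P = \mathcal{O}$ the line is vertical, and when any of $iP, jP, [i+j]P$ equals $\mathcal{O}$ the corresponding $V$ degenerates. I would handle these by noting that the divisor computation above is uniform in all these cases (a tangent contributes $2(iP)$, which is exactly what the termwise sum produces when $i=j$), and that the constant-matching argument goes through unchanged because the relevant leading coefficients vary continuously; the one case genuinely needing separate comment is when a line involved is itself vertical, where the identity should be read as the appropriate limit and checked directly from the formula $V_{mP}(Q) = x_Q - x_{mP}$. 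Apart from this case analysis the proof is a routine divisor-class computation.
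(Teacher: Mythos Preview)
Your divisor computation is exactly the paper's proof; in fact the paper stops there, simply asserting that the equation holds once the two divisors agree, without addressing the scalar ambiguity you flag.

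You are right that equal divisors only give equality up to a nonzero constant $c$, but your leading-term argument for $c=1$ is misstated. At $\mathcal{O}$ the dominant term of a nonvertical line $y-\lambda x-\nu$ is the $y$-term (pole of order $3$), not the $\lambda x$-term (order $2$), so the slopes do not enter the leading coefficient at all. The correct comparison is that $L_{iP,jP}\,L_{-iP,-jP}$ has leading part $y^{2}$ while $V_{iP}V_{jP}V_{[i+j]P}$ has leading part $x^{3}$; on $E$ one has $y^{2}=x^{3}+ax+b$, so these agree to top order and $c=1$. Your alternative suggestion of evaluating both sides at a generic point works as well and is essentially Blake--Murty--Xu's Lemma~1, whose minus sign is absorbed by the relation $\bar L_{iP,jP}=-L_{-iP,-jP}$ on a short Weierstrass curve with $a_{1}=a_{3}=0$.
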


\begin{proof} By calculating divisors, it is straightforward to see that:
\begin{align*}
div(&L_{iP, jP}(Q)L_{-iP, -jP}(Q)) \\&= div(L_{iP, jP}(Q)) + div(L_{-iP, -jP}(Q)) \\
				  &= (iP) + (jP) + (-[i + j]P) - 3(\mathcal{O}) + \\& \qquad + (-iP) + (-jP) + ([i + j]P) - 3(\mathcal{O})\\
				  &= (iP) + (-iP) + (jP) + (-jP) + \\& \qquad+ (-[i + j]P) + ([i + j]P) - 6(\mathcal{O})
\end{align*}
\begin{align*}
div(&V_{iP}(Q)V_{jP}(Q)V_{[i+j]P}(Q)) \\&= div(V_{iP}(Q)) + div(V_{jP}(Q)) + div(V_{[i+j]P}(Q))\\
				  &= (iP) + (-iP) -2(\mathcal{O}) + (jP) + (-jP) -\\ & \qquad - 2(\mathcal{O}) + ([i + j]P) + (-[i + j]P) - 2(\mathcal{O})\\
				  &= (iP) + (-iP) + (jP) + (-jP) +\\ & \qquad + (-[i + j]P)  + ([i + j]P) - 6(\mathcal{O})
\end{align*}
Thus, Eq.~(\ref{eq2}) is hold.
\end{proof}

\begin{lemma}
\begin{equation}
\frac{L_{T, T}(Q)}{V_{T}^2(Q)V_{2T}(Q)} = \frac{1}{L_{-T, -T}(Q)}. 
\label{eq3}
\end{equation} 
\end{lemma}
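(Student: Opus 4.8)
The plan is to obtain this identity as an immediate specialization of Lemma~\ref{lem2}. Concretely, I would apply Eq.~(\ref{eq2}) in the case $i = j = 1$ (equivalently, replace the role of the base point by $T$ and set $i=j$), so that $iP = jP = T$ and $[i+j]P = 2T$. The tangent line to $E$ at $T$ plays the role of $L_{iP,jP}$, and the tangent line at $-T$ plays the role of $L_{-iP,-jP}$; note these are genuinely lines (tangents) and the divisor computation in the proof of Lemma~\ref{lem2} goes through verbatim with $i=j$, since it only used $\mathrm{div}(L_{iP,jP}) = (iP)+(jP)+(-[i+j]P)-3(\mathcal{O})$, which specializes to $\mathrm{div}(L_{T,T}) = 2(T)+(-2T)-3(\mathcal{O})$.

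With that substitution, Eq.~(\ref{eq2}) reads
\begin{equation*}
L_{T,T}(Q)\,L_{-T,-T}(Q) = V_{T}(Q)\,V_{T}(Q)\,V_{2T}(Q) = V_{T}^2(Q)\,V_{2T}(Q).
\end{equation*}
Dividing both sides by $V_{T}^2(Q)\,V_{2T}(Q)\,L_{-T,-T}(Q)$ then yields exactly Eq.~(\ref{eq3}). The only caveat is that this division requires the quantities $V_{T}(Q)$, $V_{2T}(Q)$ and $L_{-T,-T}(Q)$ to be nonzero when evaluated at $Q$; this is guaranteed by the standing hypothesis that the divisor used to represent $Q$ (or, in the standard setup, the point $Q$ itself) has support disjoint from the points $\pm T$, $\pm 2T$ and from $\mathcal{O}$, which is precisely the genericity assumption already in force for Miller's algorithm and for Lemma~\ref{lem2}.

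There is essentially no hard step here: the content is entirely carried by Lemma~\ref{lem2}, and the remaining work is the trivial algebraic rearrangement above. If anything, the one point worth stating carefully is why $L_{-T,-T}$ is the correct ``conjugate-type'' object, i.e.\ that the line through $-T$ with multiplicity two is tangent to $E$ at $-T$ and meets $E$ again at $2T$ (since $-T + -T + 2T = \mathcal{O}$); this is what makes its divisor equal to $2(-T)+(2T)-3(\mathcal{O})$ and hence makes the cancellation in the displayed product exact. Once that is noted, the proof is complete in two lines.
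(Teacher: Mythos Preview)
Your proposal is correct and matches the paper's approach exactly: the paper merely states that the lemma ``is easy to be proven using the above lemma,'' i.e.\ Lemma~\ref{lem2}, and your specialization $iP=jP=T$ followed by the trivial algebraic rearrangement is precisely that argument.
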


This lemma is easy to be proven using the above lemma.

In what follows, we use the notation $L_{T, T}$ replacing for $L_{T, T}(Q)$, and the notation $V_T$ replacing for $V_T(Q)$. 

\subsection{Blake-Murty-Xu's Refinement}
Suppose that the order of subgroup $r$ have the binary representation $r = \sum_{i = 0}^t b_i 2^i$. The rational function $f_r$ can be displayed as follows:

\begin{equation}
\label{eq6}
 f_r = f_1^r \prod_{i = t}^{1} \left(\frac{L_{\lfloor \frac{r}{2^i}\rfloor P, \lfloor \frac{r}{2^i} \rfloor P}}{V_{2\lfloor \frac{r}{2^{i}}\rfloor P}}\frac{L_{2\lfloor \frac{r}{2^i}\rfloor P, b_{i - 1}P}}{V_{\lfloor \frac{r}{2^{i - 1}}\rfloor P}} \right)^{2^{i - 1}}
\end{equation}

In this formula, if $b_{i - 1} = 0$, then $L_{2\lfloor \frac{r}{2^i}\rfloor P, b_{i - 1}P} = L_{2\lfloor \frac{r}{2^i}\rfloor P, \cal{O}} = V_{\lfloor \frac{r}{2^{i - 1}}\rfloor P}$. We also assume that $V_{rP} = V_{\cal{O}} = 1$.

In the case when most bits in the binary representation of $r$ are $1$, Blake et al. given the following computations:
\begin{eqnarray}
\label{eq7}
 f_r &=& f_1^r \left(\frac{L_{P, P}L_{2P, b_{t-1}P}}{V_{2P}}\right)^{2^{t - 1}} \times \nonumber \\ &\qquad& \prod_{i = t - 1}^{1} \left(\frac{L_{\lfloor \frac{r}{2^i}\rfloor P, \lfloor \frac{r}{2^i} \rfloor P}}{V_{\lfloor \frac{r}{2^{i}}\rfloor P}^2}\frac{L_{2\lfloor \frac{r}{2^i}\rfloor P, b_{i - 1}P}}{V_{2\lfloor \frac{r}{2^i}\rfloor P}} \right)^{2^{i - 1}} \nonumber \\ 
    &=& f_1^r \left(\frac{L_{P, P}L_{2P, b_{t-1}P}}{V_{2P}}\right)^{2^{t - 1}} \times \nonumber \\ &\qquad& \prod_{i = t - 1}^{1} \left(\frac{L_{2\lfloor \frac{r}{2^i}\rfloor P, b_{i - 1}P}}{L_{-\lfloor \frac{r}{2^i}\rfloor P, -\lfloor \frac{r}{2^i} \rfloor P}} \right)^{2^{i - 1}}
\end{eqnarray}

In this refinement, Blake et al. always make a delay of one vertical line for the next step for the purpose of applying Lemma~\ref{lem1}. This trick runs well with the bit $b_i = 1$. However, in the case of $b_i$ is $0$ (there is only one vertical line dealt in this step), they added a vertical line into the current step. The number of vertical lines remaining to be calculated is thus equal that of bits $0$.

\subsection{Our Refinement}
From the observations in~\cite{BMX06, LHC07} and by combining with the Eisentr{\"a}ger, Lauter and Montgomery's trick~\cite{ELM03}, we present a modification to Miller's algorithm that can eliminate all vertical lines. 

The function $f_r$ in Eq.~(\ref{eq6}) is the product to start from term $t$ to term $1$. Let $f^{(i)}_r$ be the value of $f_r$ at the term $i$, we re-define the function $f_r$ as follows:

\begin{equation}
  f^{(i)}_r = 
     \begin{cases}
        f_1^r & \qquad \text{if $\quad i > t$} \\
        f^{(i + 1)}_r (g^{(i)})^{2^{i - 1}} & \qquad \text{if $\quad 1 < i \le t$}  \\
	f_r & \qquad \text{if $\quad i = 1$,}
     \end{cases}
\end{equation}

where the function $g^{(i)}$ is defined as follows:
\begin{equation}
\label{func_g}
  g^{(i)} = 
          \begin{cases}
            \frac{L_{\lfloor \frac{r}{2^i}\rfloor P, \lfloor \frac{r}{2^i} \rfloor P} \cdot L_{2\lfloor \frac{r}{2^i}\rfloor P, b_{i - 1}P}}{V_{2\lfloor \frac{r}{2^i}\rfloor P}} & \quad \text{if $m_{i + 1} = 0$} \\
            \frac{L^{b_{i - 1}}_{2\lfloor \frac{r}{2^i}\rfloor P, P}}{L_{-\lfloor \frac{r}{2^i}\rfloor P, -\lfloor \frac{r}{2^i} \rfloor P}} & \quad \text{if $m_{i + 1} = 1$,}
          \end{cases}
\end{equation}

If $b_{i - 1} = 0$, then $L_{2\lfloor \frac{r}{2^i}\rfloor P, b_{i - 1}P} = L_{2\lfloor \frac{r}{2^i}\rfloor P, \cal{O}} = V_{2\lfloor \frac{r}{2^i}\rfloor P}$. In this case, there is no more vertical lines in Eq.~\ref{func_g}. Otherwise, $b_{i - 1} = 1$ we will show in the following subsection how to apply the Eisentr{\"a}ger-Lauter-Montgomery's trick \cite{ELM03} to eliminate the vertical line $V_{2\lfloor \frac{r}{2^i}\rfloor P}$ in the equation $\frac{L_{\lfloor \frac{r}{2^i}\rfloor P, \lfloor \frac{r}{2^i} \rfloor P} \cdot L_{2\lfloor \frac{r}{2^i}\rfloor P, P}}{V_{2\lfloor \frac{r}{2^i}\rfloor P}}$.

In the above equation, $m_i$ is defined as follows:
\begin{equation}
   m_i = 
         \begin{cases}
            0 & \qquad \text{if $\quad i > t$} \\
            \lnot m_{i + 1} \text{ or } b_{i-1}  & \qquad \text{if $\quad  1 \le i \le t$.} \\
          \end{cases}
\end{equation}

Unlike as the Blake-Murty-Xu's refinement, we accept that maybe there is not any line delayed in some steps. If $m_i = 1$, there is a line delayed for the next step and otherwise. For $1 \le i \le t$, $m_i$ become $0$ if and only $m_{i + 1} = 1$ and the bit $b_{i - 1} = 0$.

\subsection{Our Algorithm}

We use a memory variable $m$ to note that whether there is still a vertical line delayed in the current step or not. At each step, we will apply Eq.~(\ref{eq3}) if $m = 1$. Without loss of generality, we assume that $f_1 = 1$ and $V_{rP} = 1$.

The algorithm is described by the pseudocode as in {\bf Algorithm~\ref{our:algoMiller}}.

\begin{algorithm}
   
   \KwIn{  $r = \sum_{i = 0}^t b_i 2^i$, $b_i \in \{0, 1\}$.}
   \KwOut{ $f$}
  \BlankLine
   { 
   $T \leftarrow P$, $f \leftarrow 1$, $m \leftarrow 0$\;
 \For{$i = t - 1$  \emph{\KwTo}  $0$} 
 {  
    \nl \If{$b_i = 0$ and $m = 0$} { \label{eco1}
      $f \leftarrow f^2 \cdot L_{T,T}$; $T \leftarrow 2T$;	$m \leftarrow 1$\; 
    }

    \nl \If{$b_i = 0$ and $m = 1$} {\label{eco2}
      $f \leftarrow f^2 \cdot \frac{1}{L_{-T,-T}}$; $T \leftarrow 2T$; $m \leftarrow 0$ \; 
    }

    \nl \If{$b_i = 1$ and $m = 1$} { \label{eco3}
      $f \leftarrow f^2 \cdot \frac{L_{2T,P}}{L_{-T,-T}}$;  $T \leftarrow 2T + P$;   $m \leftarrow 1$\;
    }

    \nl \If{$b_i = 1$ and $m = 0$} { \label{parabola}
      $f \leftarrow f^2 \cdot \frac{L_{T,T} \cdot L_{2T,P}}{V_{2T}}$; $T \leftarrow 2T + P$; $m \leftarrow 1$\;
    }

   }
  
  \KwRet{ $f$} 
}

\caption{Improved Refinement of Miller's Algorithm for any Pairing-Friendly Elliptic Curve}
\label{our:algoMiller}

 \end{algorithm}

\begin{remark}
As the original Miller's algorithm, our algorithm cannot avoid divisions needed to update $f$. But we can reduce them easily to one inversion at the end of the addition chain (for the cost of one squaring in addition at the each step of the algorithm).
\end{remark}

We can see that the algorithm eliminated all of vertical lines except the case of line~{\bf 4} of the Figure~\ref{our:algoMiller}. Now, we will show how to use the Eisentr{\"a}ger, Lauter and Montgomery's trick in~\cite{ELM03} to replace the quotient by a parabola equation.

\paragraph*{\textbf{Eisentr{\"a}ger-Lauter-Montgomery's trick}} In~\cite{ELM03}, the authors gave significant and useful application for computing $f_{(2i + j),P}$ directly from $f_{i,P}$ and $f_{(i + j),P}$ instead of traditional double-and-add method. They constructed a parabola, whose formula can be used to replace $\frac{L_{iP, jP} L_{[i+j]P, iP}}{V_{[i+j]P}}$, through the points $iP, iP, jP, -2iP - jP$ as follows.

Let $iP + jP = (x_3, y_3)$ and $2iP + jP = (x_4, y_4)$. Then, 
\begin{eqnarray}
&\;&\frac{L_{iP, jP} L_{[i+j]P, iP}}{V_{[i+j]P}} \nonumber \\& = & \frac{(y + y_3 - \lambda_1(x - x_3))(y - y_3 - \lambda_2(x - x_3))}{x-x_3}
\label{eq4} 
\end{eqnarray}

We simplify the right half of Eq.~(\ref{eq4}) by expanding it in powers of $x-x_3$ and obtaining the following parabola.
\begin{align*}
 \frac{y^2 - y_3^2}{x - x_3} &- \lambda_1(y - y_3)  - \lambda_2(y + y_3) + \lambda_1\lambda_2(x - x_3) \\
    &= x^2 + x_3 x + x_3^2 + a_4 + \lambda_1\lambda_2(x - x_3)- \\& \qquad - \lambda_1(y - y_3)  - \lambda_2(y + y_3) \\
    &= x^2 + (x_3 + \lambda_1\lambda_2)x - (\lambda_1 + \lambda_2)y + constant \\
    &= (x - x_1)(x + x_1 + x_3 + \lambda_1\lambda_2) - \\& \qquad - (\lambda_1 + \lambda_2)(y - y_1).
\end{align*}

Clearly, this substitution parabola needs less effort to evaluate at a point than to evaluate the quotient $\frac{L_{iP, jP} L_{[i+j]P, iP}}{V_{[i+j]P}}$ at that point. Additionally, the parabola does not reference $y_3$, so we can save one multiplication for calculating $2T + P$ by using the double-add trick of Eisentr{\"a}ger et al.

Now, we apply the Eisentr{\"a}ger, Lauter and Montgomery's method to construct a parabola replacing for $\frac{L_{T,T} L_{2T,P}}{V_{2T}}$. Similarly, let $2T = (x_3, y_3)$, then
\begin{eqnarray}
&&P_{T, P}(x, y) = \frac{L_{T,T} L_{2T,P}}{V_{2T}} = \nonumber \\ &=& \frac{(y + y_3 - \lambda_1(x - x_3))(y - y_3 - \lambda_2(x - x_3))}{x-x_3} \nonumber \\ &=& (x - x_1)(x + x_1 + x_3 + \lambda_1\lambda_2) - \nonumber \\ & \qquad & - (\lambda_1 + \lambda_2)(y - y_1),
\label{eq5} 
\end{eqnarray}
where $\lambda_1$ is the slope of the line passing through $T$ twice and $-2T$, $\lambda_2$ is the slope of the line passing through $2T$, $P$ and $-2T - P$. The quotient also has zeros at $T$ twice (i.e., tangent at $T$), $P$ and $-2T - P$ and a pole of order $4$ at $\mathcal O$. By simplifying as above, we obtain a substitution parabola.

The table~\ref{tab:Cmp} shows that our algorithm is more efficient than the classical Miller's algorithm as we save a product in
the full extension field at each doubling and each addition step. The following subsection discusses all this in more detail. In Section \ref{sec:nongen} we describe a version without denominators that can be applied for computing Tate pairing on elliptic curves with even embedding degree.

\section{Efficiency comparison}
In this section we will give a performance analysis of our algorithm and make a comparison among the original Miller's algorithm~\cite{Mil04}, the Blake-Murty-Xu's refinements~\cite{BMX06}, the Barreto et al.'s algorithm for computing the Tate pairing on curves with even embedding degrees~\cite{BLS03} and Lin et al.'s algorithm for computing the pairings on curves with the embedding degree $k = 9$~\cite{LZZW08}. 

One can consider that the cost of the algorithms for pairing computation consists of three parts: the cost of updating the function $f$, the cost of updating the point $T$ and the cost of evaluating rational functions at some point $Q$. Without special treatment, we consider that the cost of updating $T$ and the cost of evaluating rational functions $L_{T, T}$, $L_{2T, P}$ at the point $Q$ are the same for all algorithms (the cost of evaluating $L_{-T, -T}$ at a point is the same that of evaluating $L_{T, T}$ at that point). Besides, the most costly operations in pairing computations are those that take place in the full extension field $\F_{q^k}$. At high levels of security (i.e. $k$ large), the complexity of operations in $\F_{q^k}$ dominates the complexity of the operations that occur in the lower degree subfields. 

Because of the above reasons, we only focus on the cost of updating the function $f$ which is generally executed on the full extension field $\mathbb{F}_{q^k}$.

Let {\bf M}, {\bf S} and {\bf I} denote the cost of one full extension field multiplication, one full extension field squaring and one full extension field inversion respectively for updating $f$. In following analysis, the ratio of one full extension field squaring to one full extension field multiplication is set to {\bf S = 0.8M}, a commonly used value in the literature (see \cite{BHLM01}). The cost of one full extension field division that consists of one full extension field inversion {\bf I} and one full extension field multiplication {\bf M}, is generally several times more than one full extension field multiplication \cite{BHLM01, LMN10}. To avoid this, we manipulate the numerators and denominators \emph{separately}, and perform one division at the very end of the algorithm (for the cost of one full extension field squaring {\bf S} in addition for each bit treated). 

In~\cite{BLS03}, Barreto et al. pointed out that when the embedding degree $k$ is even, denominators can be totally eliminated during Tate pairing computation. The authors observed that the point $Q$ can be chosen so that its $x$-coordinate lie in a proper subfield, the valuation of the vertical line $V_{T + P} = x_Q - x_{T + P}$ would be in a proper subfield of $\mathbb{F}_{q^k}$. Thus the denominator would become $1$ when the final exponentiation is performed. Similarly, Lin et al.~\cite{LHC07} proposed an algorithm that can eliminate denominators during Tate pairing computation on curves with the embedding degrees $k = 3^i$ that can employ a cubic twist. On the other hand, their algorithm needs one full extension field multiplication compared to that of Barreto et al~\cite{BLS03}.

Table~\ref{tab:Cmp} gives a comparison of the cost of updating $f$ between our algorithm ({\bf Algorithm~\ref{our:algoMiller}}) with that in Miller's algorithm~\cite{Mil04} ({\bf Algorithm~\ref{algo:Miller}}), Blake-Murty-Xu's algorithms in~\cite{BMX06} ({\bf Algorithm~\ref{algo:BMX2}}, and {\bf Algorithm~\ref{algo:BMX4}} described in \ref{appd1}), Barreto et al.'s algorithm~\cite{BLS03} and Lin et al's algorithm~\cite{LZZW08}. 

\begin{table*}
	\centering
		\begin{tabular}{|l|c|c|}
		\hline
			& Doubling & Doubling and Addition \\
		\hline
		{\bf Algorithm~\ref{algo:Miller}} & 2{\bf S} + 2{\bf M} & 2{\bf S} + 4{\bf M}\\
		(Miller's algorithm~\cite{Mil04})&	= 3.6{\bf M} &	= 5.6{\bf M}	\\
		\hline
		{\bf Algorithm~\ref{algo:BMX2}} & 2{\bf S} + 2{\bf M} & 2{\bf S} + 2{\bf M}\\
		(Algorithm~4 in \cite{BMX06})	&	= 3.6{\bf M} &	= 3.6{\bf M} \\
		\hline
		{\bf Algorithm~\ref{algo:BMX4}} & 2{\bf S} + 1{\bf M} & 2{\bf S} + 3{\bf M}\\
		(Algorithm~3 in \cite{BMX06})	&	= 2.6{\bf M} &	= 4.6{\bf M} \\
		\hline
		\multirow{2}{*}{Barreto et al.'s algorithm~\cite{BLS03}} & 1{\bf S} + 1{\bf M} & 1{\bf S} + 2{\bf M}\\
							& = 1.8{\bf M}	& = 2.8{\bf M} \\
		\hline
		\multirow{2}{*}{Lin et al.'s algorithm~\cite{LZZW08}} &  1{\bf S} + 2{\bf M} & 1{\bf S} + 4{\bf M}\\
						 &	= 2.8{\bf M}	&	= 4.8{\bf M} \\
		\hline
		\multirow{2}{*}{{\bf Algorithm~\ref{our:algoMiller}}} & 2{\bf S} + 1{\bf M} & 2{\bf S} + 2{\bf M} = 3.6{\bf M}(line {\bf 3}) \\
		& = 2.6{\bf M} & 2{\bf S} + 1{\bf M} = 2.6{\bf M} (line {\bf 4})\\
									
		\hline
\end{tabular}
	\caption{Comparison of the cost of updating $f$ of Algorithms. ``Doubling'' is when algorithms deal with the bit ``$b_i = 0$'' and ``Doubling and Addition'' is when algorithms deal with the bit ``$b_i = 1$''.}
	\label{tab:Cmp}
\end{table*}

From Table~\ref{tab:Cmp}, for the \emph{generic} case we can see that {\bf Algorithm~\ref{our:algoMiller}} saves one full extension field multiplication when the bit $b_i = 0$ compared with {\bf Algorithm~\ref{algo:Miller}} and {\bf Algorithm~\ref{algo:BMX2}}. It has the same cost as {\bf Algorithm~\ref{algo:BMX4}}. When the bit $b_i = 1$, {\bf Algorithm~\ref{our:algoMiller}} has the same cost as {\bf Algorithm~\ref{algo:BMX2}} but saves one and two full extension field multiplications in comparison to {\bf Algorithm~\ref{algo:BMX4}} and {\bf Algorithm~\ref{algo:Miller}}, respectively. In total, our algorithm saves $\log(r) + H(r)$, $\log(r) - H(r)$ and $H(r)$ full extension field multiplications compared with the original Miller's algorithm, {\bf Algorithm~\ref{algo:BMX2}} and {\bf Algorithm~\ref{algo:BMX4}}, respectively. Here, $\log(r)$ and $H(r)$ denote the length in bits of the elliptic curve group order and the Hamming weight of the group order $r$, respectively. 

When the embedding degree $k$ gets large, the the complexity of the operations occurring in the full extension field $\F_{q^k}$ dominates the complexity of those operations occurring in $\F_q$, then, our algorithm is faster about 25\% than the original Miller's algorithm.

Our algorithm is also better than that of Lin et al.~\cite{LZZW08} in all case. It tradeoffs one ${\bf S-M}$ in the doubling step and only requires 2{\bf S} + 2{\bf M} instead of 1{\bf S} + 4{\bf M} for doubling and addition step as in their algorithm. 

In comparison with Barreto et al.'s algorithm \cite{BLS03}, our algorithm takes one more full extension field squaring for each bit. However, as already mentioned, our approach is generic and it can be applied on \emph{any} (pairing-friendly) elliptic curve.

The next section present a modification of our algorithm which can be used for computing pairings on elliptic curves with even embedding degree. We show that the efficiency of the modified algorithm is comparable to Barreto et al.'s algorithm. 

\section{A modification for elliptic curves with even embedding degree}
\label{sec:nongen}

Actual implementations are adapted to twisted elliptic curves, thus the Miller's algorithm can be implemented more efficiently. Indeed, as pointed out in~\cite{BLS03} such curves admit an even twist which allows to eliminate denominators and all irrelevant terms in  the subfield of $\F_{q^k}$. In the case of a cubic twist, denominator elimination is also possible \cite{LZZW08}. Another advantage of embedding degrees of the form $2^i3^j$, where $i \ge 1$, $j \ge 0$ is that the corresponding extensions of $\F$ can be written as composite extensions of degree $2$ or $3$, which allows faster basic arithmetic operations \cite{KM05}.

In this subsection we construct a variant of {\bf Algorithm~\ref{our:algoMiller}} in the case of $k$ even.

Let $v = (a + ib)$ be a representation of an element of $\F_{q^k}$, where $a$, $b \in \F_{q^{k/2}}$, and $i$ is a quadratic non-residue and $\delta = i^2$. The conjugate of $v$ over $\F_{q^{k/2}}$ is given by $\bar{v} = \overline{(a + ib)} = a - ib$. It follows that, if $v\neq 0$, then
\begin{equation*}
\frac{1}{v}=\frac{\overline{v}}{a^2-\delta b^2}
\end{equation*}
where $a^2-\delta b^2 \in \mathbb{F}_{q^{k/2}}$. Thus, in a situation where elements of $\F_{q^{k/2}}$ can be ignored, $\frac{1}{v}$ can be replaced by $\overline{v}$, thereby saving an inversion in $\F_{q^k}$ \cite{Sco06}.

We exploit this fact in the following modification of the algorithm, where
we replace the denominator $L_{-T, -T}$ by its conjugate $\overline{L_{-T, -T}}$.

The new algorithm works as follows:

 
\begin{algorithm}
   
   \KwIn{  $r = \sum_{i = 0}^t b_i 2^i$, $b_i \in \{0, 1\}$.}
   \KwOut{ $f$}
  \BlankLine
   { 
   $T \leftarrow P$, $f \leftarrow 1$, $m \leftarrow 0$\;
 \For{$i = t - 1$  \emph{\KwTo}  $0$} 
 {  
    \nl \If{$b_i = 0$ and $m = 0$} { \label{eco12}
      $f \leftarrow f^2 L_{T,T}$; $T \leftarrow 2T$;	$m \leftarrow 1$\; 
    }

    \nl \If{$b_i = 0$ and $m = 1$} {\label{eco22}
      $f \leftarrow f^2 \overline{L_{-T, -T}}$; $T \leftarrow 2T$; $m \leftarrow 0$ \; 
    }

    \nl \If{$b_i = 1$ and $m = 1$} { \label{eco32}
      $f \leftarrow f^2 L_{2T,P}\overline{L_{-T, -T}}$;  $T \leftarrow 2T + P$;   $m \leftarrow 1$\;
    }

    \nl \If{$b_i = 1$ \text{and} $m = 0$} { \label{parabola1}
      $f \leftarrow f^2 P_{T, P}(x, y)$; $T \leftarrow 2T + P$; $m \leftarrow 1$\;
    }

   }
  
  \KwRet{ $f$} 
}

\caption{Improved Refinement of Miller's Algorithm for Even Twisted Curves during Tate pairing computation}
\label{our:algoMiller1}

\end{algorithm}

The factor $P_{T, P}(x, y)$ in the case of line~\ref{parabola1} of the above algorithm is the parabola equation described in Eq.~\ref{eq5}.

Table~\ref{tab:Cmp1} gives a comparison between the modified algorithm and the original Miller's algorithm and Barreto et al.'s algorithm.

\begin{table*}
	\centering
		\begin{tabular}{|l|c|c|}
		\hline
			& Doubling & Doubling and Addition \\
		\hline
		\multirow{2}{*}{Miller's algorithm~\cite{Mil04}} & 2{\bf S} + 2{\bf M} & 2{\bf S} + 4{\bf M}\\
							&	= 3.6{\bf M}				&	= 5.6{\bf M}			 \\
		\hline
		\multirow{2}{*}{Barreto et al.'s algorithm~\cite{BLS03}} & 1{\bf S} + 1{\bf M} & 1{\bf S} + 2{\bf M}\\
																		&	= 1.8{\bf M}				&	= 2.8{\bf M}			 \\
		\hline
		\multirow{2}{*}{{\bf Algorithm~\ref{our:algoMiller1}}} & 1{\bf S} + 1{\bf M} & 1{\bf S} + 2{\bf M} =2.8{\bf M} (line {\bf 3})\\
		&=1.8{\bf M} & 1{\bf S} + 1{\bf M} = 1.8{\bf M} (line {\bf 4})\\
		\hline
		\end{tabular}
	\caption{Number of operations in $\F_{q^k}$ during Tate pairing computation in the case of curves with even embedding degree}
	\label{tab:Cmp1}
\end{table*}

From Table~\ref{tab:Cmp1}, we can see that our algorithm needs no more effort to update the function $f$ than Barreto et al.'s algorithm. When the complexity of operations in $\F_{q^k}$ dominate the complexity of the operations that occur in the lower degree subfields, the total cost of {\bf Algorithm~\ref{our:algoMiller1}} is only about 60\% of that of the original Miller's algorithm.

\section{Experiments}
\label{sec:expe}

We implemented our algorithms and ran some experiments on different elliptic curves at the $128$-bits security level. For this security level, one can choose elliptic curves with the embedding degree $6 \le k \le 10$ when $\rho \approx 2$, and $12 \le k \le 20$ when $\rho \approx 1$ (see \cite[Table 1]{FST10}). In our implementations, we implemented curves with the embedding degrees $k = 9$ \cite{LZZW08}, $k = 12$ \cite{BN06}, and $k = 18$ \cite{KSS08}. We compared the performance of {\bf Algorithm~\ref{algo:Miller}}, {\bf Algorithm~\ref{our:algoMiller}}, and the algorithm proposed in 2008 by \cite{LZZW08} when $k = 9$, while when $k = 12$ and $k = 18$, we compared the performance of {\bf Algorithm~\ref{algo:Miller}}, {\bf Algorithm~\ref{our:algoMiller1}}, and the algorithm in \cite{BLS03}. 

The implementations which are based the library for doing Number Theory (NTL) \cite{Shoup09}, and the GNU Multi-Precision package (GMP) \cite{gmp}, did not use any optimization trick. Computations on $100$ random inputs are performed only on Miller function (without any final exponentiation) in affine coordinates. Average timings are measured on an Intel(R) Core(TM)2 Duo CPU E8500 @ 3.16GHz, 4 GB of RAM under Ubuntu 10.10 32-bit operating system. The experimental results are summarized in  Figure~\ref{table:Timming}.

We don't apply any twist in implementations. Thus, in the example with $k = 9$, the full extension field $\F_{p^k}$ was generated as $\F_{p^k}/(x^9 + x + 1)$ while when $k = 12$, the full extension field $\F_{p^k}$ was generated as $\F_{p^k}/(x^{12} + 5)$, and  when $k = 18$, the full extension field $\F_{p^k}$ was generated as $\F_{p^k}/(x^{18} + x + 3)$.

Parameters of used elliptic curves are given as follows:

\begin{itemize}
 \item For $k = 9$, the elliptic curve is defined by $E: y^2 = x^3 + 1$ over a finite field of $348$-bits, and 
  \bigskip

    \begin{small}
     $\begin{array}{rrl}
	r & = & 1758592360244376049423345540022962797459 \\ & & 272736402347141193268746504567484534417; \\
	p & = & 3061451105959572350992904218241517192718 \\ & & 315802710560373001011629795786952195361 \\ & &19724392170588602764112177;\\
	\rho & = & 4/3; \\
	
      \end{array}$

    \end{small}
\bigskip
 \item For $k = 12$, the elliptic curve is defined by $E: y^2 = x^3 + 5$ over a finite field of $254$-bits, and 
  \bigskip

    \begin{small}
     $\begin{array}{rrl}

	r & = & 160305690344031282777566882874986495155 \\ & & 10226217719936227669524443298095169537; \\
	p & = & 160305690344031282777566882874986495156 \\ & & 36838101184337499778392980116222246913;\\
	\rho & = & 1; \\
	
      \end{array}$

    \end{small}
   \bigskip
    \item For $k = 18$, the elliptic curve is defined by $E: y^2 = x^3 + 19$ over a finite field of $335$-bits, and 
  \bigskip

    \begin{small}
     $\begin{array}{rrl}
		
	r & = & 10786994225696144150491191871486839136 \\ & & 9781354128945134119237266728176832001; \\
	p & = & 58709285320900073406925617811693805623 \\ & & 56430404913482030243997510873476960788 \\ & &5279673307215755454161141;\\
	\rho & = & 4/3;
	
      \end{array}$

    \end{small}

\end{itemize}

\begin{table*}
\centering
  \begin{tabular}{|c||c|c|c|c|}
    \hline
    $k$ & Miller algorithm & Our algorithms & BLS algorithm \cite{BLS03} & LZZW algorithm \cite{LZZW08} \\
    \hline
    $9$ &  $0.0568 (s)$ & $0.0404 (s)$ & - & $0.0517 (s)$ \\
    \hline
    $12$ & $0.0509 (s)$  & $0.0278 (s)$ & $0.0285 (s)$ & - \\
    \hline
    $18$ & $0.1164 (s)$  & $0.0596 (s)$ & $0.0613 (s)$ & - \\
    \hline

  \end{tabular}

\caption{Timings}
\label{table:Timming}
\end{table*} 

Table~\ref{table:Timming} shows that our refinement is faster about 25\%-40\% in comparison to the original Miller algorithm. It is also faster about 20\% in comparison with the algorithm in \cite{LZZW08} which eliminates denominators using a cubic twist when the embedding degree $k = 9$. When $k$ even, our algorithm is comparable to the algorithm in \cite{BLS03}. Table~\ref{table:Timming} also shows that at 128-bits security level, the Barreto-Naehrig (BN) curve \cite{BN06} over a prime field of size roughly 256-bits with the embedding degree $k = 12$ is the best choice.

\section{Conclusion and open problems}
In this paper we extended the Blake-Murty-Xu's method to propose further refinements to Miller's algorithm which is at the heart of all pairing-based cryptosystems. Our algorithm can eliminate all of vertical lines in the original Miller's algorithm, and so it is \emph{generically} more efficient than the refinements of Blake-Murty-Xu \cite{BMX06} and that of Liu et al.~\cite{LHC07}. We also proposed a variant that can eliminate denominators as in Barreto et al.'s algorithm for computing Tate pairing on even twisted elliptic curves \cite{BLS03}. 

Our improvement works perfectly well for computing both of Weil and Tate pairings over any pairing-friendly elliptic curve. In~\cite{Ver10}, the author introduced the concept of \emph{optimal pairings} which can be computed with $log_2 r/\varphi(k)$ basic Miller iterations. For example, using Theorem 2 of \cite{Hes08}, it should be possible to find an elliptic curve with a prime embedding degree minimizing the number of iterations. We believe that there will be applications in pairing-based cryptography using elliptic curves with embedding degree not being of form $2^i3^j$. Further work is needed to clarify such questions.

\section*{Funding}
A part of this work was done while the first author held post-doc position at the Universit{\'e} de Caen Basse-Normandie. The second author was financed by the National Science Council, Taiwan, R.O.C., under contract number: NSC 98-2221-E-468-012 and NSC 99-2221-E-468-012. 


\begin{thebibliography}{99}

\bibitem{MOV91}
Menezes, A. and Vanstone, S. and Okamoto, T. (1991) Reducing elliptic curve
  logarithms to logarithms in a finite field.
\newblock {\em STOC '91: Proceedings of the twenty-third annual ACM symposium
  on Theory of computing}, New Orleans, Louisiana, United States,  05 - 08 May,  pp. 80--89. ACM, New York.

\bibitem{FR94}
Frey, G. and R\"uck, H.G. (1994) A Remark Concerning m-divisibility and the Discrete Logarithm in the Divisor Class Group of Curves. 
\newblock {\em Mathematics of Computation}, {\bf 62}, 865-–874.

\bibitem{Jou00}
Joux, A. (2000) A one round protocol for tripartite {D}iffie-{H}ellman.
\newblock {\em ANTS-IV: Proceedings of the 4th International Symposium on
  Algorithmic Number Theory}, Leiden, The Netherlands,  2-7 July,  pp. 385--394.
  Springer-Verlag, London, UK.

\bibitem{BF01}
Boneh, D. and Franklin, M.~K. (2001) Identity-based encryption from the {W}eil
  pairing.
\newblock {\em CRYPTO '01: Proceedings of the 21st Annual International
  Cryptology Conference on Advances in Cryptology}, Santa Barbara, California, August 19-23,
   pp. 213--229. Springer-Verlag, London, UK.

\bibitem{BLS01}
Boneh, D. and Lynn, B. and Shacham, H. (2001) Short Signatures from the {W}eil Pairing.
\newblock {\em ASIACRYPT '01: Proceedings of the 7th International Conference on the Theory and Application of Cryptology and Information Security}, Gold Coast, Australia, December 9-13, pp. 514--532. Springer-Verlag, London, UK.

\bibitem{Mil04}
Miller, V.~S. (2004) The {W}eil pairing, and its efficient calculation.
\newblock {\em J. Cryptology}, {\bf  17}, 235--261.

\bibitem{BLS03}
Barreto, P.~S.~L.~M. and Lynn, B. and Scott, M. (2003) On the selection of
  pairing-friendly groups.
\newblock In Matsui, M. and Zuccherato, R.~J. (eds.), {\em Selected Areas in
  Cryptography},  Ottawa, Canada,  August 14-15,  Lecture Notes in Computer
  Science, {\bf3006},  pp. 17--25. Springer-Verlag, Berlin, Heidelberg.

\bibitem{BGHS07}
Barreto, P.~S.~L.~M. and Galbraith, S.~D. and H\'{e}igeartaigh, C.~O. and Scott, M. (2007)
  Efficient pairing computation on supersingular abelian varieties.
\newblock {\em Des. Codes Cryptography}, {\bf  42}, 239--271.

\bibitem{HSV06}
Hess, F. and Smart, N.~P., and Vercauteren, F. (2006) The {E}ta pairing
  {R}evisited.
\newblock {\em IEEE Transactions on Information Theory}, {\bf  52}, 4595--4602.

\bibitem{LLP09}
Lee, E. and Lee, H.-S. and Park, C.-M. (2009) Efficient and generalized pairing
  computation on abelian varieties.
\newblock {\em IEEE Trans. Inf. Theor.}, {\bf  55}, 1793--1803.

\bibitem{Ver10}
Vercauteren, F. (2010) Optimal pairings.
\newblock {\em IEEE Transactions on Information Theory}, {\bf  56}, 455--461.

\bibitem{Hes08}
Hess, F. (2008) Pairing lattices.
\newblock {\em Pairing '08: Proceedings of the 2nd international conference on
  Pairing-Based Cryptography}, Egham, UK,  1–3 Sep.,  pp. 18--38.
  Springer-Verlag, Berlin, Heidelberg.
  
\bibitem{BMX06}
Blake, I.~F. and Murty, V.~K. and Xu, G. (2006) Refinements of miller's algorithm
  for computing the {W}eil/{T}ate pairing.
\newblock {\em J. Algorithms}, {\bf  58}, 134--149.

\bibitem{BELL10}
Boxall, J. and El Mrabet, N. and Laguillaumie, F. and Le, D.-P. (2003) A Variant of Miller's Formula and Algorithm.
\newblock In Marc J. and Atsuko M. and Akira O. editors, {\em
  Pairing 2010: Proceedings of the 4th international conference on Pairing-Based Cryptography}, Japan, December 13-17, volume 6487 of {\em Lecture Notes in Computer Science}, pages
  417--434. Springer, Berlin, Heidelberg.

\bibitem{LHC07}
Liu, C.-L. and Horng, G. and Chen, T.-Y. (2007) Further refinement of pairing
  computation based on miller's algorithm.
\newblock {\em Applied Mathematics and Computation}, {\bf  189}, 395--409.

\bibitem{Kob98}
Koblitz, N. and Menezes, A.~J. and Wu, Y.-H. and Zuccherato, R.~J. (1998) {\em
  Algebraic aspects of cryptography}. Springer-Verlag New York, Inc.,  New
  York, NY, USA.

\bibitem{ELM03}
Eisentr{\"a}ger, K. and Lauter, K. and Montgomery, P.~L. (2003) Fast elliptic
  curve arithmetic and improved {W}eil pairing evaluation.
\newblock In Joye, M. (ed.), {\em CT-RSA},  San Francisco, CA, USA,  April
  13-17,  Lecture Notes in Computer Science, {\bf2612},  pp. 343--354.
  Springer-Verlag, Berlin, Heidelberg.

\bibitem{LZZW08}
Lin, X. and Zhao, C.-A. and Zhang, F. and Wang, Y. (2008) Computing the ate pairing
  on elliptic curves with embedding degree k = 9.
\newblock {\em IEICE Trans. Fundam. Electron. Commun. Comput. Sci.}, {\bf
  E91-A}, pp. 2387--2393.

\bibitem{BHLM01}
Brown, M. and Hankerson, D. and L{\'o}pez, J. and Menezes, A. (2001) Software {I}mplementation of the {NIST} {E}lliptic {C}urves {O}ver
  {P}rime {F}ields.
\newblock In {\em Proceedings of the 2001 Conference on Topics in Cryptology:
  The Cryptographer's Track at RSA}, CT-RSA 2001, pages 250--265, London, UK,
  2001. Springer-Verlag.

\bibitem{LMN10}
Lauter, K. and Montgomery, P.~L. and Naehrig, M. (2010) An analysis of affine coordinates for pairing computation.
\newblock In Marc, J. and Atsuko, M. and Akira O. editors, {\em
  Pairing 2010: Proceedings of the 4th international conference on Pairing-Based Cryptography}, Japan, December 13-17, volume 6487 of {\em Lecture Notes in Computer Science}, pages
  1--20. Springer, Berlin, Heidelberg.
\newblock {http://eprint.iacr.org/}.

\bibitem{BN06}
Barreto P.~S.~L.~M. and Naehrig, M. (2005) Pairing-friendly elliptic curves of prime order.
\newblock In Galbraith, S. and Paterson, K., editors, {\em
  SAC 2005: 12th International Workshop Selected Areas in Cryptography}, Kingston, Canada, February, volume 3897 of {\em Lecture Notes in Computer Science}, pages
  319-331. Springer-Verlag, Berlin, Heidelberg.
\newblock {http://eprint.iacr.org/}.

\bibitem{KM05}
Koblitz, N. and Menezes, A. (2005) Pairing-Based Cryptography at High Security Levels.
\newblock In Smart, N.P. (ed.), Cryptography and Coding, Proceedings of the 10th IMA International Conference,
               Cirencester, UK, December 19-21, pp. 13--36, Springer-Verlag, London.
         
\bibitem{Sco06}
Scott M. (2006) Implementing Cryptographic Pairings. Dublin City University, Glasnevin, Dublin 9, Ireland.

\bibitem{FST10}
Freeman, D. and Scott, M. and Teske, E. (2010) A Taxonomy of Pairing-Friendly Elliptic Curves.
\newblock {\em J. Cryptology}, {\bf 23(2)}, pp. 224--280.

\bibitem{KSS08}
Kachisa, E.~J. and Schaefer, E.~F. and Scott, M. (2008) {Constructing Brezing-Weng Pairing-Friendly Elliptic Curves Using Elements in the Cyclotomic Field}.
\newblock In Galbraith, S. and Paterson, K., editors, {\em
  Pairing 2008: Proceedings of the 2nd international conference on Pairing-Based Cryptography}, Egham, UK, September 1-3, volume 5209 of {\em Lecture Notes in Computer Science}, pages
  126--135. Springer-Verlag, Berlin, Heidelberg.

\bibitem{Shoup09}
Shoup, V. (2009) {NTL}: a library for doing number theory.
\newblock {http://www.shoup.net/ntl/}.

\bibitem{gmp}
The GNU MP bignum library (accessed 2010).
\newblock {http://www.gmplib.org/}.

\end{thebibliography}

\appendix
\section{Algorithm 3 in \cite{BMX06}}
\label{appd1}
In radix-4 representation, Blake, Murty and Xu \cite{BMX06} presented a refinement on Miller algorithm which works as in {\bf Algorithm~\ref{algo:BMX4}}.

\begin{algorithm}
   
   \KwIn{  $r = \sum_{i = 0}^t q_i 4^i$, $q_i \in \{0, 1, 2, 3\}$, $P, Q \in E[r]$.}
   \KwOut{ $f$}
  \BlankLine
   { 
   $T \leftarrow P$, $f \leftarrow 1$\;
    \If{$q_r = 2$} {
        $f \leftarrow f^2 \cdot \frac{L_{P,P}(Q)}{V_{2P}(Q)}$; $T \leftarrow 2P$ \; 
    }
    \If{$q_r = 3$} {
        $f \leftarrow f^3 \cdot \frac{L_{P,P}^2(Q) \cdot V_P(Q)}{L_{2P, P}(-Q)}$; $T \leftarrow 3P$ \; 
    }
 \For{$i = t - 1$  \emph{\KwTo}  $0$} 
 {  
    \If{$q_i = 0$} {
      $f \leftarrow f^4 \frac{L_{T,T}^2(Q)}{L_{2T,2T}(-Q)}$; $T \leftarrow 4T$\; 
    }

    \If{$q_i = 1$} {
      $f \leftarrow f^4 \frac{L_{T,T}^2(Q) \cdot L_{4T,P}(Q)}{V_{4T + P}(Q) \cdot L_{2T,2T}(-Q)}$; $T \leftarrow 4T + P$\; 
    }

    \If{$q_i = 2$} {
      $f \leftarrow f^4 \frac{L_{T,T}^2(Q) \cdot L_{2T,P}^2(Q)}{V_{2T}^2(Q) \cdot L_{2T + P,2T + P}(-Q)}$; $T \leftarrow 4T + 2P$\; 
    }

    \If{$q_i = 3$} {
      $f \leftarrow f^4 \frac{L_{T,T}^2(Q) \cdot L_{2T,P}^2(Q) \cdot L_{4T + 2P, P}(Q)}{V_{2T}^2(Q) \cdot L_{2T + P,2T + P}(-Q) \cdot V_{4T + 3P}(Q)}$; $T \leftarrow 4T + 3P$\; 
    }

   }
  
  \KwRet{ $f$} 
}

\caption{Blake-Murty-Xu's Refinement on Miller's Algorithm in base 4}
\label{algo:BMX4}

 \end{algorithm}

\end{document}